\newcommand{\lca}{\mbox{\it lca}}
\newcommand{\parent}{\mathit{p}}
\title{A Novel Algorithm for the All-Best-Swap-Edge Problem on Tree Spanners}
\titlerunning{A Novel Algorithm for the ABSE Problem on Tree Spanners}
\author{Davide Bil\`o}{Department of Humanities and Social Sciences, University of Sassari, Italy}{davidebilo@uniss.it}{https://orcid.org/0000-0003-3169-4300}{}
\author{Kleitos Papadopoulos}{InSPIRE, Agamemnonos 20, Nicosia, 1041, Cyprus}{kleitospa@gmail.com}{https://orcid.org/0000-0002-7086-0335}{}
\authorrunning{D. Bil\`o and K. Papadopoulos}
\subjclass{\ccsdesc[100]{Theory of computation~Graph algorithms analysis}}
\keywords{Transient edge failure, best swap edges, tree spanner.}
\begin{document}

\maketitle

\begin{abstract}
Given a 2-edge connected, unweighted, and undirected graph $G$ with $n$ vertices and $m$ edges, 
a $\sigma$-tree spanner is a spanning tree $T$ of $G$ in which the ratio between the distance in $T$ of any pair of vertices and the corresponding distance in $G$ is upper bounded by $\sigma$. The minimum value of $\sigma$ for which $T$ is a $\sigma$-tree spanner of $G$ is also called the {\em stretch factor} of $T$. We address the fault-tolerant scenario in which each edge $e$ of a given tree spanner may temporarily fail and has to be replaced by a {\em best swap edge}, i.e. an edge that reconnects $T-e$ at a minimum stretch factor. More precisely, we design an $O(n^2)$ time and space algorithm that computes a best swap edge of every tree edge. Previously, an $O(n^2 \log^4 n)$ time and $O(n^2+m\log^2n)$ space algorithm was known for edge-weighted graphs [Bil\`o et al., ISAAC 2017]. Even if our improvements on both the time and space complexities are of a polylogarithmic factor, we stress the fact that the design of a $o(n^2)$ time and space algorithm would be considered a breakthrough.
\end{abstract}

\section{Introduction}

Given a 2-edge connected, unweighted, and undirected graph $G$ with $n$ vertices and $m$ edges, 
a $\sigma$-tree spanner is a spanning tree $T$ of $G$ in which the ratio between the distance in $T$ of any pair of vertices and the corresponding distance in $G$ is upper bounded by $\sigma$. The minimum value of $\sigma$ for which $T$ is a $\sigma$-tree spanner of $G$ is also called the {\em stretch factor} of $T$. The stretch factor of a tree spanner is a measure of how the all-to-all distances degrade w.r.t. the underlying communication graph if we want to sparsify it. Therefore, tree spanners find several applications in the network design problem area as well as in the area of distributed algorithms (see also~\cite{IIOY03,Pro00} for some additional practical motivations).

Unfortunately, tree-based network infrastructures are highly sensitive to even a single transient link failure, since this always results in a network disconnection. Furthermore, when these events occur, the computational costs for rearranging the network flow of information from scratch (i.e., recomputing a new tree spanner with small stretch factor, reconfiguring the routing tables, etc.) can be extremely high. Therefore, in such cases it is enough to promptly reestablish the network connectivity by the addition of a {\em swap} edge, i.e. a link that temporarily substitutes the failed edge.

In this paper we address the fault-tolerant scenario in which each edge $e$ of a given tree spanner may undergo a transient failure and has to be replaced by a {\em best swap edge}, i.e. an edge that reconnects $T-e$ at a minimum stretch factor. More precisely, we design an $O(n^2)$ time and space algorithm that computes all the best swap edges (ABSE for short) in unweighted graphs, that is a best swap edge for every edge of $T$. Previously, an $O(n^2 \log^4 n)$ time and $O(n^2+m\log^2n)$ space algorithm was known for edge-weighted graphs. Even though the overall improvements in both the time and space complexities are of a polylogarithmic factor, we stress the fact that designing an $o(n^2)$ time and space algorithm would be considered a breakthrough in this field (see~\cite{DBLP:conf/isaac/BiloCG0P17}). Furthermore, the approach proposed in this paper uses only one technique provided in~\cite{DBLP:conf/sirocco/BiloCG0P15}; all the remaining ideas are totally new and are at the core of the design of both a time and space efficient algorithm. Our algorithm is also easy to implement and makes use of very simple data structures.  

\subsection{Related work}

The ABSE problem on tree spanners has been introduced by Das et al. in~\cite{DBLP:journals/jgaa/DasGW10}, where the authors designed two algorithms for both the weighted and the unweighted case, running in $O(m^2 \log n)$ and $O(n^3)$ time, respectively, and using $O(m)$ and $O(n^2)$ space, respectively. Subsequently, Bilò et at.~\cite{DBLP:conf/sirocco/BiloCG0P15} improved both results by providing two efficient linear-space solutions for both the weighted and the unweighted case, running in $O(m^2 \log \alpha(m,n))$ and $O(m n  \log n)$ time, respectively. Recently, in~\cite{DBLP:conf/isaac/BiloCG0P17} the authors designed a very clever recursive algorithm that uses centroid-decomposition techniques and lower envelope data structures to solve the ABSE problem on tree spanners in $O(n^2 \log^4 n)$ time and $O(n^2+m\log^2 n)$ space. 
Table \ref{table} summarizes the state of the art for the ABSE problem on tree spanners.

\begin{table}[ht]
\caption{The state of the art for the ABSE problem on tree spanners. The naive algorithm works as follows: for each edge $e$ of the tree spanner $T$ (that are $O(n)$), we look at all the possible swap edges (that are $O(m)$) and, for each swap edge $f$, we compute the stretch factor of $T$ where $e$ is swapped with $f$ (this requires $O(n^2)$). We observe that the naive algorithm needs to store the all-to-all (post-failure) distances in $G-e$.} 
\centering 
\begin{tabular}{|l|c|c|c|c|} 
\hline
\multirow{2}{*}{\bf Algorithm}						 & \multicolumn{2}{c|}{\bf weighted graphs} 				& \multicolumn{2}{c}{\bf unweighted graphs}\vline \\
\cline{2-5}
												& {\bf time}					& {\bf space}			& {\bf time}	& {\bf space}\\
\hline
naive											& $\Theta(n^3m)$ 					& $\Theta(n^2)$ 				& $\Theta(n^3m)$		& $\Theta(n^2)$ \\
\hline
Das et al.~\cite{DBLP:journals/jgaa/DasGW10}& $O(m^2 \log n)$ 				& $O(m)$				& $O(n^3)$		& $O(n^2)$\\
\hline
Bilò et al.~\cite{DBLP:conf/sirocco/BiloCG0P15}	& $O(m^2 \log \alpha(m,n))$ 	& $O(m)$				& $O(mn \log n)$	& $O(m)$\\
\hline
Bilò et al.~\cite{DBLP:conf/isaac/BiloCG0P17}	& $O(n^2 \log^4 n)$ 			& $O(n^2+m\log^2n)$		& $O(n^2 \log^4 n)$ 			& $O(n^2+m\log^2n)$\\
\hline
this paper		 								& -							 	& -						& $O(n^2)$					& $O(n^2)$\\
\hline
\end{tabular}
\label{table}
\end{table}

\subsection{Other related work on ABSE}

The ABSE problems in spanning trees have received a lot of attention from the algorithmic community.
The most famous and first studied ABSE problem was on \emph{minimum spanning trees}, where the quality of a swap edge is measured w.r.t. the overall cost of the resulting tree (i.e., sum of the edge weights). This problem, a.k.a. \emph{sensitivity analysis} problem on minimum spanning trees, can be solved in $O(m\log\alpha(m,n))$ time~\cite{Pet05}, where $\alpha$ denotes the inverse of the Ackermann function. 
In the {\em minimum diameter spanning tree} a quality of a swap edge is measured w.r.t. the \emph{diameter} of the swap tree~\cite{IR98,NPW01}. Here the ABSE problem can also be solved in $O(m \log \alpha(m,n))$ time~\cite{BGP15}. In the {\em minimum routing-cost spanning tree}, the best swap minimizes the overall sum of the all-to-all distances of the swap tree~\cite{WHC08}. The fastest algorithm for solving the ABSE problem in this case has a running time of $O\left(m 2^{O(\alpha(n,n))}\log^2 n\right)$~\cite{BGP14}. Concerning the {\em single-source shortest-path tree}, several criteria for measuring the quality of a swap edge have been considered. The most important ones are:
\begin{itemize}
\item the maximum or the average distance from the root; here the corresponding ABSE problems can be solved in  $O(m \log \alpha(m,n))$ time (see~\cite{BGP15}) and $O(m \, \alpha(n,n) \log^2 n)$ time (see~\cite{DP07}), respectively;
\item the maximum and the average stretch factor from the root for which the corresponding ABSE problems have been solved in $O(m n  +n^2 \log n)$ and $O(m n \log \alpha(m,n))$ time, respectively~\cite{DBLP:conf/sirocco/BiloCG0P17}.
\end{itemize}
Finally, the ABSE problems have also been studied in a distributed setting~\cite{FEPPS04, FEPPS06, FPPSW08}.

\section{Preliminary definitions}
Let $G = (V(G), E(G))$ be a $2$-edge-connected, unweighted, and undirected graph of $n$ vertices and $m$ edges, respectively, and let $T$ be a spanning tree of $G$. Given an edge $e \in E(G)$, we denote by $G-e=(V(G),E(G)\setminus\{e\})$ the graph obtained after the removal of $e$ from $G$.
Given an edge $e \in E(T)$, let $S(e)$ denote the set of all the \emph{swap edges} of $e$, i.e., all edges in $E(G) \setminus \{ e \}$ whose endpoints belong to two different connected components of $T-e$.
For any $e \in E(T)$ and $f \in S(e)$, let $T_{e/f}$ denote the \textit{swap tree} obtained from $T$ by replacing $e$ with $f$. Given two vertices $x,y \in V(G)$, we denote by $d_G(x,y)$ the \emph{distance} between $x$ and $y$ in $G$, i.e., the number of edges contained in a shortest path in $G$ between $x$ and $y$. We define the \textit{stretch factor} $\sigma_G(T)$ of $T$ w.r.t. $G$ as 
\begin{equation*}
\sigma_G(T) = \max_{x,y \in V(G)}  \frac{d_T(x,y)}{d_G(x,y)}.
\end{equation*}

\begin{definition}[Best Swap Edge]
	Let $e \in E(T)$. An edge $f^* \in S(e)$ is a \textit{best swap edge} for $e$ if $f^* \in \arg\min_{f \in S(e)} \sigma_{G-e}\big(T_{e/f}\big)$.
\end{definition}
For a rooted tree $T$ and two vertices $u$ and $v$ of $T$, we denote by $\mathcal{A}(v)$ the set of all the proper ancestors of $v$ in $T$, we denote by $p(v)$ the parent of $v$ in $T$, and we denote by $\lca(u,v)$ the least common ancestor of $u$ and $v$ in $T$.
\section{The algorithm}

In this section we design an $O(n^2)$ time and space algorithm that computes a best swap edge for every edge of $T$. Let $r$ be an arbitrarily chosen vertex of $T$. For the rest of the paper, we assume that $T$ is rooted at $r$. The algorithm works as follows. First, for every vertex $x$ of $T$, the algorithm computes the set $E(x):=\big\{(x,y) \in E(G)\setminus E(T) \mid x \not \in {\mathcal A}(y)\big\}$ of non-tree edges of the form $(x,y)$, where $x$ is not an ancestor of $y$ in $T$ (see Figure \ref{fig:edge_subdivision}). Observe that some sets $E(x)$ may be empty. Observe also that each edge $(x,y)$ such that $x \not\in {\mathcal A}(y)$ and $y \not\in {\mathcal A}(x)$ is contained in both $E(x)$ and $E(y)$. The precomputation of the all the sets $E(x)$ requires linear time if we use a data structure that can compute the least common ancestor of any 2 given vertices in constant time~\cite{DBLP:conf/stoc/GabowT83}.
\begin{figure}[t]
	\centering
	\includegraphics[scale=1.1]{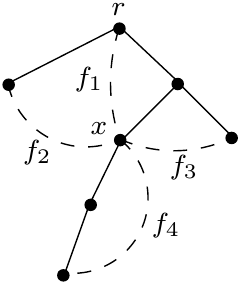}
	\caption{An example showing how the set $E(x)$ is defined. Tree edges are solid, while swap edges are dashed. In this example $E(x)=\{f_1,f_2,f_3\}$.}
	\label{fig:edge_subdivision}
\end{figure}

The algorithm visits the edges of $T$ in postorder and, for each edge $e \in E(T)$, it computes a corresponding best swap edge in $O(n)$ time. For the rest of the paper, unless stated otherwise, let $e=(p(v),v)$ be a fixed tree edge and let $X$ be the set of vertices contained in the subtree of $T$ rooted at $v$. The algorithm computes a best swap edge $f^*$ of $e$ as follows. First, for every $x \in X$, the algorithm computes a \emph{candidate}  best swap edge $f_{x}$ of $e$ that is chosen among the edges of $F(x,e):=E(x) \cap S(e)$.\footnote{With a little abuse of notation, if $F(x,e)=\emptyset$, then $f_x=\perp$ and $\sigma_{G-e}(T_{e/f_x})=+\infty$.} More precisely, 
\[
f_x \in \arg\min_{f \in F(x,e)}\sigma_{G-e}(T_{e/f}).
\]
The best swap edge $f^*$ is then selected among the computed candidate best swap edges. More precisely, 
\begin{equation}\label{eq:best_swap_edge}
f^* \in \arg\min_{f_x \text{ s.t. }x \in X}\sigma_{G-e}\big(T_{e/f_x}\big).
\end{equation}
We can prove the following lemma.
\begin{lemma}\label{lm:best_swap_edge_computation}
The edge $f^*$ computed as in~(\ref{eq:best_swap_edge}) is a best swap edge of $e$.
\end{lemma}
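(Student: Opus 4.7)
\medskip

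My plan is to show the lemma by reducing it to a single combinatorial fact: as $x$ ranges over $X$, the family $\{F(x,e)\}_{x\in X}$ covers all of $S(e)$. Once this covering property is established, the conclusion will follow immediately from the definitions of $f_x$ and $f^*$ as nested $\arg\min$'s.

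First I would unwind the definitions. Fix any $f\in S(e)$; I want to produce some $x \in X$ with $f\in F(x,e)$, i.e.\ with $f\in E(x)$. Writing $f=(a,b)$, since $f\in S(e)$ the endpoints $a,b$ lie in distinct components of $T-e$, so exactly one of them, say $a$, belongs to the subtree $X$ rooted at $v$, while $b\notin X$. I would then argue that $a\notin\mathcal{A}(b)$: if $a$ were a proper ancestor of $b$, then $b$ would lie in the subtree rooted at $a$, which is contained in the subtree rooted at $v$, contradicting $b\notin X$. I also need $f\notin E(T)$, so that $f$ is eligible for membership in $E(a)$; this holds because any tree edge joining the two components of $T-e$ must be $e$ itself, and $e\notin S(e)$. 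Together these observations give $f\in E(a)\cap S(e)=F(a,e)$.

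With the covering property in hand, the rest is mechanical. For every $f\in S(e)$, pick $x\in X$ with $f\in F(x,e)$; by the defining property of $f_x$,
\[
\sigma_{G-e}\bigl(T_{e/f_x}\bigr)\;\le\;\sigma_{G-e}\bigl(T_{e/f}\bigr).
\]
By the defining property of $f^*$ in~(\ref{eq:best_swap_edge}),
\[
\sigma_{G-e}\bigl(T_{e/f^*}\bigr)\;\le\;\sigma_{G-e}\bigl(T_{e/f_x}\bigr).
\]
Chaining these two inequalities yields $\sigma_{G-e}(T_{e/f^*})\le\sigma_{G-e}(T_{e/f})$ for every $f\in S(e)$, and since $f^*$ itself is some $f_x$ and hence lies in $S(e)$, it is indeed a best swap edge of $e$.

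I do not expect any real obstacle here: the only subtle point is the edge-case check that $f$ is a non-tree edge and that $a\notin\mathcal{A}(b)$, both of which follow directly from $f\in S(e)$ together with the fact that removing $e$ disconnects exactly the subtree $X$ from the rest of $T$. Everything else is a clean two-step $\min$-of-$\min$ computation.
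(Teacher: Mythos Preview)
Your proposal is correct and follows essentially the same approach as the paper: both arguments establish the covering property $S(e)=\bigcup_{x\in X}F(x,e)$ and then read off the result from the nested minimizations. Your write-up is somewhat more explicit about why the $X$-endpoint is not an ancestor of the other endpoint and why $f\notin E(T)$, but the underlying idea is identical.
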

\begin{proof}
Let $x \in X$ and let $(x,y) \in S(e)$ be any swap edge of $e$ incident to $x$. Since $x \not\in {\mathcal A}(y)$, we have that $(x,y) \in E(x)$. Therefore, $(x,y) \in F(x,e)$. As a consequence, $S(e) = \bigcup_{x \in X} F(x,e)$. Hence
\begin{equation*}
\sigma_{G-e}\big(T_{e/f^*}\big) = \min_{f_x \text{ s.t. } x \in X}\sigma_{G-e}\big(T_{e/f_x}\big)  = \min_{x \in X}\min_{f \in F(x,e)} \sigma_{G-e}\big(T_{e/f}\big)=\min_{f \in S(e)} \sigma_{G-e}\big(T_{e/f}\big).
\end{equation*}
The claim follows.
\end{proof}

\subsection{How to compute the candidate best swap edges}\label{subsec:how_to_compute_the_candidate_best_swap_edge}

As already proved in Lemma 3 of~\cite{DBLP:conf/sirocco/BiloCG0P15}, the candidate best swap edge $f_x$ can be computed via a reduction to the \emph{subset minimum eccentricity problem on trees}. We revise the reduction in the following. 
In the \emph{subset minimum eccentricity problem on trees}, we are given a tree ${\mathcal T}$, with a cost $c(y)$ associated with each vertex $y$, and a subset $Y\subseteq V({\mathcal T})$, and we are asked to find a vertex in $Y$ of \emph{minimum eccentricity}, i.e., a vertex $y^* \in Y$ such that
\begin{equation*}
y^* \in \arg\min_{y \in Y}\max_{y' \in V({\mathcal T})}\big(d_{{\mathcal T}}(y,y')+c(y')\big).
\end{equation*}

The reduction from the problem of computing the candidate best swap edge $f_x$ to the subset minimum eccentricity problem on trees is as follows. The input tree corresponds to $T$, the cost associated with each vertex $y$ is $c_x(y) :=\max_{x' \in X, (x',y) \in S(e)}d_{T}(x',x)$,\footnote{If $S(e)$ contains no edge incident to $y$, then $c_x(y)=-\infty$.} and the subset of vertices from which we have to choose the one with minimum eccentricity is $Y(x,e):=\big\{y \mid (x,y) \in F(x,e)\big\}$. As the following lemma shows, the problem can be solved by computing:
\begin{itemize}
\item the endvertices of a \emph{diametral path} of $T$, i.e., two (not necessarily distinct) vertices $a_x, b_x \in V(T)$ such that
\[
\{a_x,b_x\} \in \arg\max_{\{a,b\}, a,b \in V(T) }\big (c_x(a)+d_{T}(a,b)+c_x(b)\big);
\]
\item a \emph{center} of $T$, i.e., a vertex $\gamma_x \in V(T)$ such that
\[
\gamma_x \in \arg\min_{\gamma \in V(T)}\max_{y \in V(T)}\big(d_{T}(\gamma,y)+c_x(y)\big).
\]
\end{itemize}

\begin{lemma}[Bilò et al.\cite{DBLP:conf/sirocco/BiloCG0P15}, Lemma 6 and Lemma 7]\label{lm:sirocco}
Let $\gamma_x$ be a center of $T$ and let $a_x$ and $b_x$ be the two endvertices of a diametral path $P$ in $T$. Then $\gamma_x$ is also a center of $P$. Furthermore, if $y_x \in Y(x,e)$ is the vertex closest to the center $\gamma_x$, i.e., $y_x \in \arg\min_{y \in Y(x,e)} d_{T}(y,\gamma_x)$, then $f_x:=(x,y_x)$ is a candidate best swap edge of $e$ and $\sigma_{G-e}\big(T_{e/f_x}\big) = 1+\max\big\{d_T(y_x,a_x)+c_x(a_x),d_T(y_x,b_x)+c_x(b_x)\big\}$.
\end{lemma}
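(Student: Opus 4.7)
The plan splits the statement into three pieces: (a) a weighted analogue of ``any center of a tree lies on every diameter'', applied to the $c_x$-weighted metric; (b) the identity $\sigma_{G-e}(T_{e/f})=1+\varepsilon(y)$ for $f=(x,y)\in F(x,e)$, where $\varepsilon$ denotes the $c_x$-weighted eccentricity on $T$; and (c) the observation that $\varepsilon(y)$, minimized over $y\in Y(x,e)$, is attained at the vertex closest to $\gamma_x$.

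For (a), I would define $\varepsilon(v):=\max_{z\in V(T)}\bigl(d_T(v,z)+c_x(z)\bigr)$, $\rho:=\min_v\varepsilon(v)$, and the weighted diameter $D:=\max_{a,b}\bigl(c_x(a)+d_T(a,b)+c_x(b)\bigr)$. The bound $D\le 2\rho$ is immediate from the triangle inequality applied at any vertex. For $D\ge 2\rho$ I would use the standard sliding argument: at a center $\gamma_x$, two vertices attaining $\varepsilon(\gamma_x)=\rho$ must lie in different subtrees hanging off $\gamma_x$, otherwise moving $\gamma_x$ one edge toward the unique ``heavy'' subtree would strictly decrease $\varepsilon$; any such pair realizes weighted distance $2\rho$. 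Containment $\gamma_x\in P$ is then forced by chaining $c_x(a_x)+d_T(a_x,\gamma_x)+d_T(\gamma_x,b_x)+c_x(b_x)\ge D=2\rho$ against the two individual upper bounds $c_x(a_x)+d_T(a_x,\gamma_x)\le\rho$ and $c_x(b_x)+d_T(b_x,\gamma_x)\le\rho$; equalities propagate to give $d_T(a_x,b_x)=d_T(a_x,\gamma_x)+d_T(\gamma_x,b_x)$ (so $\gamma_x\in P$) together with $\max\{c_x(a_x)+d_T(\gamma_x,a_x),\,c_x(b_x)+d_T(\gamma_x,b_x)\}=\rho=D/2$. The same $D/2$ lower bound applies at every $v\in V(P)$, so $\gamma_x$ is a center of $P$.

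For (b), every swap edge $(u,w)\in S(e)$ still exists in $G-e$, so $d_{G-e}(u,w)=1$, while $d_{T_{e/f}}(u,w)=d_T(u,x)+1+d_T(y,w)$; such a pair therefore contributes stretch $1+d_T(u,x)+d_T(y,w)$. Maximizing $d_T(u,x)$ over $u\in X$ with $(u,w)\in S(e)$ yields exactly $c_x(w)$, and maximizing over $w$ gives $1+\varepsilon(y)$. I expect the main obstacle to be the matching upper bound, namely that no other vertex pair creates a larger stretch in $T_{e/f}$ versus $G-e$: this is handled by rerouting a shortest $G-e$ path through some swap edge in $S(e)$ and invoking the triangle inequality in $T_{e/f}$. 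Given (b), step (c) follows from the classical tree-diameter fact that the $c_x$-weighted farthest vertex from any $y\in V(T)$ lies in $\{a_x,b_x\}$, so $\varepsilon(y)=\max\{d_T(y,a_x)+c_x(a_x),\,d_T(y,b_x)+c_x(b_x)\}$. Letting $z$ be the projection of $y$ onto $P$ and using that the two defining terms of the $P$-restricted eccentricity coincide with value $\rho$ at $\gamma_x$, one obtains $\varepsilon(y)=\rho+d_T(y,\gamma_x)$, so minimizing over $y\in Y(x,e)$ amounts to picking the $y_x$ closest to $\gamma_x$; substituting yields the claimed value of $\sigma_{G-e}(T_{e/f_x})$.
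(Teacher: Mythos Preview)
The paper does not prove this lemma; it is quoted from \cite{DBLP:conf/sirocco/BiloCG0P15} (Lemmas~6 and~7 there) and used as a black box, so there is no in-paper argument to set your sketch against. Your three-step plan---(a) the weighted center lies on every weighted diameter, (b) $\sigma_{G-e}(T_{e/f})=1+\varepsilon(y)$ via the ``critical swap edge'' reduction, (c) eccentricity is determined by distance to the center---is precisely the route the cited paper follows, so in that sense your proposal matches the intended argument.

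One point does need more care. Your sliding argument asserts that if all vertices realising $\varepsilon(\gamma_x)=\rho$ lie in a single subtree then moving one step into that subtree \emph{strictly} decreases the eccentricity; in an integer-weighted tree this step can merely preserve $\varepsilon$ (the neighbour is then a second center), so you only obtain $2\rho-1\le D\le 2\rho$, not $D=2\rho$. Consequently the identity $\varepsilon(y)=\rho+d_T(y,\gamma_x)$ you derive from ``the two defining terms coincide with value $\rho$ at $\gamma_x$'' can be off by one when $D$ is odd: one of $c_x(a_x)+d_T(a_x,\gamma_x)$ and $c_x(b_x)+d_T(b_x,\gamma_x)$ equals $\rho$ and the other $\rho-1$, and $\varepsilon(y)$ is then $\rho+d_T(y,\gamma_x)$ or $\rho+d_T(y,\gamma_x)-1$ according to which half of $P$ the projection of $y$ meets. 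Because distances are integers this still yields $\varepsilon(y_x)\le\varepsilon(y)$ whenever $y_x$ is the \emph{strict} closest vertex of $Y(x,e)$ to $\gamma_x$, but your formula alone does not settle ties. The standard repair is to argue monotonicity of $\varepsilon$ along every path leaving the \emph{set} of centers (equivalently, work with the center edge when $D$ is odd); the displayed expression for $\sigma_{G-e}(T_{e/f_x})$ is unaffected since it is phrased directly via $a_x,b_x$ rather than via $\rho$.
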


In what follows we show how all the vertices $y_x$ and all the values $\sigma_{G-e}\big(T_{e/f_x}\big)$, for every $x \in X$, can be computed in $O(n)$ time and space. More precisely, the algorithm first computes the endvertices $a_x$ and $b_x$, for every $x \in X$, in $O(n)$ time and space. Thanks to Lemma~\ref{lm:sirocco}, once both $a_x$ and $b_x$ are known, and since all tree edges have length equal to 1, we can compute $\gamma_x$ in constant time using a constant number of least common ancestor and {\em level ancestor} queries~\cite{DBLP:journals/tcs/BenderF04}.\footnote{Indeed, by computing the least common ancestor between $a_x$ and $b_x$, say $\bar x$, we know whether $\gamma_x$ is along either the $\bar x$ to $a_x$ path or the $\bar x$ to $b_x$ path. If $\gamma_x$ is an ancestor of $a_x$, then its distance from $a_x$ is equal to $\big\lceil (c_x(a_x)+d_T(a_x,b_x)+c_x(b_x))/2 \big\rceil - c_x(a_x)$. If $\gamma_x$ is an ancestor of $b_x$, then its distance from $b_x$ is equal to $\big\lceil (c_x(a_x)+d_T(a_x,b_x)+c_x(b_x))/2 \big\rceil - c_x(b_x)$.}
Finally, for each $x \in X$, we show how to compute the vertex $y_x$ that is closest to $\gamma_x$ in constant time using {\em range-minimum-query} data structures~\cite{DBLP:journals/tcs/BenderF04, DBLP:conf/stoc/GabowT83}.

\subsubsection{How to compute the endvertices of the diametral paths}

To compute $a_x$ and $b_x$, we make use of the following key lemma.

\begin{lemma}[Merge diameter lemma]\label{lm:merge_diameter_lemma}
Let $T$ be a tree, with a cost $c(y)$ associated with each $y \in V(T)$. Let $c_1,\dots,c_\ell$ be $\ell$ (vertex-cost) functions and let $k_1,\dots,k_\ell$ be $\ell$ constants such that, for every vertex $y \in V(T)$, $c(y) = \max_{i=1,\dots,\ell}\big(c_i(y)+k_i\big)$. For every $i=1,\dots,\ell$, let $a_i,b_i$ be the two endvertices of a diametral path of $T$ w.r.t. the cost function $c_i$. Then, there are two indices $i,j=1,\dots, \ell$ ($i$ may also be equal to $j$) and two vertices $a \in \{a_i,b_i\}$ and $b \in \{a_j,b_j\}$ such that:
\begin{enumerate}
\item $a$ and $b$ are the two endvertices of a diametral path of $T$ w.r.t. cost function $c$;
\item $c(a)=c_i(a)+k_i$;
\item $c(b)=c_j(b)+k_j$.
\end{enumerate}
Furthermore, if $a_i$, $b_i$, and their corresponding costs $c_i(a_i)$ and $c_i(b_i)$ are known for every $i=1,\dots,\ell$, then the vertices $a$ and $b$ can be computed in $O(\ell)$ time and space.
\end{lemma}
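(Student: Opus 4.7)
The plan is to reduce the lemma to two applications of a strengthened ``farthest-weighted-vertex lies on a diameter'' property, and then recover $a$ and $b$ by running the standard two-sweep diameter algorithm on a restricted subset of at most $2\ell$ vertices.

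First I would prove the following strengthening of the classical lemma: for any tree $T$ with vertex cost function $c'$, if $(\alpha,\beta)$ is a $c'$-diametral pair, then for every vertex $v\in V(T)$,
\begin{equation*}
\max_{u\in V(T)}\bigl(c'(u)+d_T(v,u)\bigr)=\max\bigl(c'(\alpha)+d_T(v,\alpha),\;c'(\beta)+d_T(v,\beta)\bigr).
\end{equation*}
Letting $u^{\star}$ be a maximizer of the left-hand side, this follows from the four-point inequality on the tree metric, $d_T(v,u^{\star})+d_T(\alpha,\beta)\le\max\bigl(d_T(v,\alpha)+d_T(u^{\star},\beta),\,d_T(v,\beta)+d_T(u^{\star},\alpha)\bigr)$, combined with the diameter bound $c'(u^{\star})+d_T(u^{\star},x)+c'(x)\le c'(\alpha)+d_T(\alpha,\beta)+c'(\beta)$ for $x\in\{\alpha,\beta\}$. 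Applied once in each coordinate---first to $(c_i,a_i,b_i)$ and then to $(c_j,a_j,b_j)$---this yields, for every $i,j\in\{1,\dots,\ell\}$, the product-set identity
\begin{equation*}
\max_{x,y\in V(T)}\bigl(c_i(x)+d_T(x,y)+c_j(y)\bigr)=\max_{a\in\{a_i,b_i\},\,b\in\{a_j,b_j\}}\bigl(c_i(a)+d_T(a,b)+c_j(b)\bigr).
\end{equation*}

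Since $c(v)=\max_i(c_i(v)+k_i)$, the $c$-diameter of $T$ equals $\max_{i,j,\,a\in\{a_i,b_i\},\,b\in\{a_j,b_j\}}\bigl(c_i(a)+k_i+d_T(a,b)+c_j(b)+k_j\bigr)$. For any attaining quadruple $(i,j,a,b)$, the pointwise bounds $c\ge c_i+k_i$ and $c\ge c_j+k_j$ force the sandwich
\begin{equation*}
c_i(a)+k_i+d_T(a,b)+c_j(b)+k_j\;\le\;c(a)+d_T(a,b)+c(b)\;\le\;\max_{x,y\in V(T)}\bigl(c(x)+d_T(x,y)+c(y)\bigr)
\end{equation*}
to collapse to equalities throughout, which simultaneously certifies the three required properties. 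For the algorithmic part I would form $V':=\bigcup_{i=1}^{\ell}\{a_i,b_i\}$, of cardinality at most $2\ell$, and attach to each $v\in V'$ the effective cost $\tilde c(v):=\max_{i\,:\,v\in\{a_i,b_i\}}(c_i(v)+k_i)$, computable in $O(\ell)$ total time by a single scan over the indices. The identity above implies that the $c$-diameter of $T$ equals the $\tilde c$-diameter over the vertex subset $V'$, so $a,b$ are returned by the usual two-sweep procedure on $(V',\tilde c)$: pick any $v_0\in V'$, set $a:=\arg\max_{v\in V'}\bigl(\tilde c(v)+d_T(v_0,v)\bigr)$, and then $b:=\arg\max_{v\in V'}\bigl(\tilde c(v)+d_T(a,v)\bigr)$, each tree distance being an $O(1)$ LCA query.

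The main technical hurdle I anticipate is arguing that the two-sweep algorithm remains correct when restricted to the subset $V'\subseteq V(T)$ with the effective cost $\tilde c$; this reduces once more to the same four-point manipulation used for the strong classical lemma, together with the observation that $\tilde c$ coincides with $c$ at every vertex capable of witnessing the $c$-diameter of $T$, so the unrestricted maximum is automatically attained inside $V'\times V'$.
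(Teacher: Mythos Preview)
Your proof is correct and takes a genuinely different route from the paper's.

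For the existence part, the paper argues geometrically. Starting from an arbitrary $c$-diametral pair $(a,b)$ with $c(a)=c_i(a)+k_i$ and $c(b)=c_j(b)+k_j$, it locates a vertex $t$ on the path $P_i$ from $a_i$ to $b_i$ (the first vertex where the $a$-to-$a_i$ path meets $P_i$) and an analogous $t'$ on $P_j$; a case analysis on the tree topology shows that some $\bar a\in\{a_i,b_i\}$ and $\hat b\in\{a_j,b_j\}$ satisfy $d_T(\bar a,\hat b)=d_T(\bar a,t)+d_T(t,t')+d_T(t',\hat b)$, and then diametrality of $P_i$ and $P_j$ gives $c_i(a)+d_T(a,t)\le c_i(\bar a)+d_T(\bar a,t)$ and the symmetric bound at $t'$, from which the conclusion follows. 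You instead isolate the reusable fact that the $c'$-farthest vertex from any fixed point is always a diameter endpoint, prove it once via the four-point inequality, and apply it coordinate by coordinate. Your argument is more modular and makes explicit that only the tree-metric property is used; the paper's is self-contained and avoids invoking the four-point condition.

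For the algorithm, the paper does incremental dynamic programming: it maintains a diametral pair $(\alpha_i,\beta_i)$ for the partial cost $\psi_i=\max_{j\le i}(c_j+k_j)$ and merges in one $c_i$ at a time, each step being the $\ell=2$ instance of the lemma itself. You instead run a single two-sweep over the $2\ell$ candidate endpoints with the effective cost $\tilde c$. Both are $O(\ell)$. The paper's incremental scheme dovetails with how the lemma is later applied (the cost functions in Lemmas~\ref{lm:endvertices_vertex_vertex}--\ref{lm:endvertices_vertex_index} are all built up one piece at a time), whereas your version is cleaner as a stand-alone statement but requires the extra justification that two-sweep remains correct on the restricted pair $(V',\tilde c)$; as you anticipate, this follows by viewing $\tilde c$ as a full cost function on $T$ that equals $-\infty$ outside $V'$ and reapplying the same four-point argument.
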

\begin{proof}
Let $a,b$ be the two endvertices of a diametral path in $T$ w.r.t. the cost function $c$. For some $i,j=1,\dots,\ell$, we have that $c(a)=c_{i}(a)+k_i$ as well as $c(b)=c_j(b)+k_j$ ($i$ may also be equal to $j$). Let $P_i$ (resp., $P_j$) be the path in $T$ between $a_i$ (resp., $a_j$) and $b_i$ (resp., $b_j$). Let $t$ be the first vertex of the path in $T$ from $a$ to $a_i$ that is also in $P_i$, where we assume that the path is traversed in the direction from $a$ to $a_i$. Similarly, let $t'$ be the first vertex of the path in $T$ from $b$ to $b_j$ that is also in $P_j$, where we assume that the path is traversed in the direction from $b$ to $b_j$. We claim that there are $\bar a \in \{a_i,b_i\}$ and $\hat b \in \{a_j,b_j\}$ such that 
\begin{equation}\label{eq:merge_diameter_lemma_ter}
d_T(\bar a, \hat b)=d_T(\bar a, t)+d_T(t,t')+d_T(t',\hat b).
\end{equation}
Indeed, we observe that at least one of the two paths in $T$ from $a_i$ to $t'$ and from $b_i$ to $t'$ passes through $t$. W.l.o.g., we assume that the path in $T$ from $a_i$ to $t'$ passes through $t$ (see Figure~\ref{fig:merge_diameter_lemma}). Similarly, at least one of the two paths in $T$ from $a_i$ to $a_j$ and from $a_i$ to $b_j$ passes through $t'$. As a consequence, such a path also passes through $t$. Therefore $\bar a = a_i$ and $\hat b \in \{a_j,b_j\}$ (see Figure~\ref{fig:merge_diameter_lemma}).
\begin{figure}[t]
	\centering
	\includegraphics[scale=1.0]{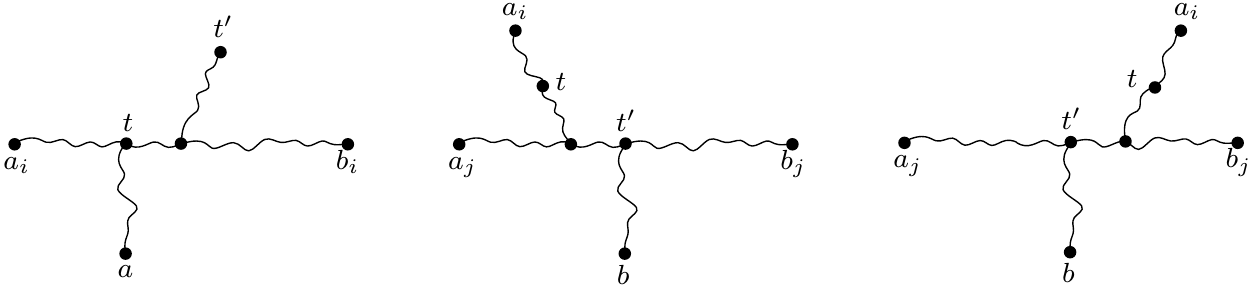}
	\caption{On the left side the path from $a_i$ to $t'$ passes through $t$. In the middle, the path from $a_i$ to $b_j$ passes through $t'$, and thus to $t$. On the right side, the path from $a_i$ to $a_j$ passes through $t'$, and thus to $t$.}
	\label{fig:merge_diameter_lemma}
\end{figure}

Let $\bar b \in \{a_i,b_i\}$, with $\bar b \neq \bar a$, and $\hat a \in \{a_j,b_j\}$, with $\hat a \neq \hat b$. Since $\bar a$ and $\bar b$ are the endvertices of a diametral path in $T$ w.r.t. the cost function $c_i$, we have that
\begin{align*}
c_i(a)+d_T(a,t)+d_T(t,\bar b) + c_i(\bar b) 	& = c_i(a)+d_T(a,\bar b) + c_i(\bar b)\\
												& \leq c_i(\bar a)+d_T(\bar a,\bar b) + c_i(\bar b)\\
												& = c_i(\bar a)+d_T(\bar a,t)+d_T(t,\bar b)+c_i(\bar b),
\end{align*}
from which we derive
\begin{equation}\label{eq:merge_diameter_lemma}
c_i(a)+d_T(a,t)\leq c_i(\bar a)+d_T(\bar a,t).
\end{equation}
Similarly, since $\hat a$ and $\hat b$ are the endvertices of a diametral path in $T$ w.r.t. the  cost function $c_j$, we have that
\begin{align*}
c_j(\hat a)+d_T(\hat a,t')+d_T(t', b) + c_j(b)	& = c_j(\hat a)+d_T(\hat a, b) + c_j(b)\\
												& \leq c_j(\hat a)+d_T(\hat a,\hat b) + c_j(\hat b)\\
												& = c_j(\hat a)+d_T(\hat a,t')+d_T(t',\hat b)+c_j(\hat b),
\end{align*}
from which we derive
\begin{equation}\label{eq:merge_diameter_lemma_bis}
d_T(t',b)+c_j(b)\leq d_T(t',\hat b)+c_j(\hat b).
\end{equation}
Using Inequality (\ref{eq:merge_diameter_lemma}) and Inequality (\ref{eq:merge_diameter_lemma_bis}), together with Equality (\ref{eq:merge_diameter_lemma_ter}), we obtain
\begin{align*}
c(a)+d_T(a,b)+c(b)	& \leq c(a)+d_T(a,t) + d_T(t,t') + d_T(t',b) + c(b)\\
					& = c_i(a)+k_i + d_T(a,t) + d_T(t,t') + d_T(t',b) + c_j(b)+k_j\\
					& \leq c_i(\bar a) + k_i + d_T(\bar a, t) + d_T(t,t') + d_T(t',\hat b) + c_j(\hat b) + k_j\\
					& = c_i(\bar a) + k_i + d_T(\bar a, \hat b) + c_j(\hat b) + k_j\\
					& \leq c(\bar a) + d_T(\bar a, \hat b) + c(\hat b).
\end{align*}
Since $a$ and $b$ are the two endvertices of a diametral path in $T$ w.r.t. cost function $c$, the above inequality is satisfied at equality. As a consequence, $a=\bar a$ and $b=\hat b$ satisfy all the three conditions of the lemma statement.

We complete the proof by showing that $a$ and $b$ can be computed in $O(\ell)$ time using dynamic programming. For every $i=1,\dots,\ell$, we compute the endvertices $\alpha_i$ and $\beta_i$ of a diametral path in $T$ w.r.t. the cost function $\psi_i:=\max_{1\leq j\leq i}\big(c_j(y)+k_j\big)$, together with their corresponding costs. Clearly, for $i=1$, $\alpha_1=a_1, \beta_1=b_1, \psi_1(\alpha_1)=c_1(a_1)+k_1$, and $\psi_1(\beta_1)=c_1(b_1)+k_1$. Moreover, for every $i\geq 2$, we can compute $\alpha_i$ and $\beta_i$, together with $\psi_i(\alpha_i)$ and $\psi_i(\beta_i)$, in constant time and space from $\alpha_{i-1}$ and $\beta_{i-1}$, where $\psi_i(x)=\psi_{i-1}(x)$ for $x \in \{\alpha_{i-1},\beta_{i-1}\}$, and from $a_i$ and $b_i$, where $\psi_i(x)=c_i(x)+k_i$ for $x \in \{a_i,b_i\}$. Therefore, $\alpha_\ell$ and $\beta_\ell$, together with $\psi_\ell(\alpha_\ell)$ and $\psi_\ell(\beta_\ell)$, can be computed in $O(\ell)$ time and space. The claim follows by observing that $a=\alpha_{\ell}$ and $b=\beta_{\ell}$.

\end{proof}

Lemma~\ref{lm:merge_diameter_lemma} is extensively used by our algorithm to precompute some useful information.
For every $x \in V(T)$ and every $z \in \mathcal{A}(x)$, the algorithm precomputes $Y(x|z)=\big\{y \mid (x,y) \in E(x) \text{ and } \lca(x,y)=z\big\}$. Next, the algorithm computes the two endvertices $a_{x|z}$ and $b_{x|z}$ of a diametral path of $T$, together with their associated costs, w.r.t. the following cost function:
\begin{equation*}\label{eq:atomic_cost_functions}
c_{x|z}(y):=\begin{cases}
				0		&	\text{if $y \in Y(x|z)$;}\\
				-\infty &	\text{otherwise.}
			\end{cases}
\end{equation*}
\begin{lemma}\label{lm:endvertices_vertex_vertex}
For every $x \in V(T)$ and every $z \in \mathcal{A}(x)$, all the vertices $a_{x|z}, b_{x|z}$ and their corresponding costs w.r.t. $c_{x|z}$ can be computed in $O(n^2)$ time and space.
\end{lemma}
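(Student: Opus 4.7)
The plan is to maintain a table of size $O(n^2)$ indexed by pairs $(x,z)$ with $x \in V(T)$ and $z \in \mathcal{A}(x)$, storing, for each such pair, the endpoints $a_{x|z}$ and $b_{x|z}$ of a diametral path of $T$ w.r.t.\ $c_{x|z}$ together with their costs. Since $\sum_{x \in V(T)} |\mathcal{A}(x)| = O(n^2)$, the table fits in $O(n^2)$ space and can be initialized with $a_{x|z}=b_{x|z}=\bot$ and costs $-\infty$ in $O(n^2)$ time. As a preprocessing step, we build in $O(n)$ time an LCA data structure on $T$~\cite{DBLP:journals/tcs/BenderF04,DBLP:conf/stoc/GabowT83}, so that LCA queries, and hence tree-distance queries, are answered in $O(1)$ time.

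The core of the algorithm is an incremental application of Lemma~\ref{lm:merge_diameter_lemma} with $\ell=2$. Fix a pair $(x,z)$ and suppose we already know the diametral endpoints of $T$ w.r.t.\ the cost function that is $0$ on some $Y' \subseteq Y(x|z)$ and $-\infty$ outside, and that we want to incorporate one more vertex $y \in Y(x|z) \setminus Y'$. We take $c_1$ to be the current cost function (whose diametral endpoints and costs are the values currently stored in the table) and $c_2$ to be the cost function equal to $0$ at $y$ and $-\infty$ elsewhere, whose diametral endpoints are both trivially $y$. The merge diameter lemma then yields the diametral endpoints and their costs w.r.t.\ $\max\{c_1,c_2\}$ in $O(1)$ time and space, via a constant number of $O(1)$-time distance queries on $T$.

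Equipped with this primitive, we iterate over every vertex $x \in V(T)$ and every non-tree edge $(x,y) \in E(x)$, compute $z := \lca(x,y)$ in $O(1)$ time, and perform a single incremental update on the table entry for $(x,z)$. After all updates, the entry for $(x,z)$ correctly stores $a_{x|z}$, $b_{x|z}$ and their costs, because $Y(x|z) = \{\, y \mid (x,y) \in E(x),\ \lca(x,y) = z\,\}$ is exactly the set of vertices inserted into it. The total number of updates is $\sum_{x \in V(T)} |E(x)| = O(m) = O(n^2)$, each costing $O(1)$; together with the $O(n^2)$ initialization, the overall cost is $O(n^2)$ time and space.

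The main subtlety is to verify that the per-update cost really is $O(1)$ and does not require re-scanning vertices previously absorbed into $(a_{x|z}, b_{x|z})$. This is guaranteed by Lemma~\ref{lm:merge_diameter_lemma}: with $\ell=2$ the lemma runs in constant time, and its output depends only on the two input diametral pairs and their costs, which are exactly the information stored in the table. Hence maintaining a constant amount of information per cell suffices throughout the whole computation, and the claimed bounds follow.
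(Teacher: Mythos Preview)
Your proof is correct and takes essentially the same approach as the paper: both decompose $c_{x|z}$ into single-vertex cost functions (one per $y \in Y(x|z)$) and combine them via the merge diameter lemma, with the total work bounded by $\sum_{x}\sum_{z\in\mathcal{A}(x)}(1+|Y(x|z)|)=O(n^2)$. The only difference is organizational: the paper processes each pair $(x,z)$ in one batch call to Lemma~\ref{lm:merge_diameter_lemma} with $\ell=|Y(x|z)|$, whereas you iterate over the edges of $E(x)$ and update the appropriate table cell one vertex at a time with $\ell=2$; since the dynamic-programming part of Lemma~\ref{lm:merge_diameter_lemma} already proceeds one function at a time, the two are effectively identical.
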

\begin{proof}
We show that, for any $x \in V(T)$ and any $z \in \mathcal{A}(x)$, the vertices $a_{x|z}$ and $b_{x|z}$ can be computed in $O\big(1+|Y(x|z)|\big)$ time and space. The claim would follow immediately since 
\begin{equation*}
\sum_{x \in V(T)}\sum_{z \in \mathcal{A}(x)}O\Big(1+\big|Y(x|z)\big|\Big)=\sum_{x \in V(T)}O\left(\sum_{z \in \mathcal{A}(x)}\big(1+\big|Y(x|z)\big|\big)\right)=\sum_{x \in V(T)}O(n)=O(n^2).
\end{equation*}
Let $x \in V(T)$ and $z \in \mathcal{A}(x)$ be fixed, and let $\ell=\big|Y(x|z)\big|$. Let $y_1,\dots,y_\ell$ be the $\ell$ vertices of $Y(x|z)$ and, finally, for every $i=1,\dots,\ell$, let
\begin{equation*}
c_{i}(y):=	\begin{cases}
					0		&	\text{if $y=y_i$;}\\
					-\infty	&	\text{otherwise.}
				\end{cases}
\end{equation*}
We have that $a_i=b_i=y_i$ are the two endvertices of the unique diametral path in $T$ w.r.t. cost function $c_i$. Moreover, for every $y \in V(T)$, we have that 
$c_{x|z}(y)=\max_{i=1,\dots,\ell}c_i(y)$. Therefore, using Lemma~\ref{lm:merge_diameter_lemma}, we can compute $a_{x|z}, b_{x,z}$, and their corresponding costs w.r.t. $c_{x|z}$, in $O\big(1+|Y(x|z)|\big)$ time and space.

\end{proof}
Let $c_{x,e}$ be a cost function that, for every $y \in V(T)$, is defined as follows
$c_{x,e}(y):=\max_{z \in \mathcal{A}(v)}c_{x|z}(y)$.
The algorithm also precomputes the two endvertices $a_{x,e}$ and $b_{x,e}$ of a diametral path of $T$ w.r.t. the cost function $c_{x,e}$, together with the corresponding values $c_{x,e}(a_{x,e})$ and $c_{x,e}(b_{x,e})$. The following lemma holds.
\begin{lemma}\label{lm:endvertices_vertex_edge}
For every $x \in V(T)$ and every edge $e$ in the path in $T$ between $r$ and $x$, all the vertices $a_{x,e},b_{x,e}$ and their corresponding costs w.r.t. $c_{x,e}$ can be computed in $O(n^2)$ time and space.
\end{lemma}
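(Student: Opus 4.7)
My plan is to fix $x\in V(T)$ and compute $a_{x,e},b_{x,e}$ together with their costs $c_{x,e}(a_{x,e}),c_{x,e}(b_{x,e})$ incrementally, by processing the edges on the root-to-$x$ path in top-down order and invoking Lemma~\ref{lm:merge_diameter_lemma} once per edge.

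Concretely, let $r=w_0,w_1,\dots,w_k=x$ be the vertices on the path from $r$ to $x$, and set $e_i=(p(w_i),w_i)=(w_{i-1},w_i)$ for $i=1,\dots,k$. The crucial structural observation is that the ancestor sets are nested along this path: $\mathcal{A}(w_1)=\{r\}$ and $\mathcal{A}(w_i)=\mathcal{A}(w_{i-1})\cup\{w_{i-1}\}$ for $i\geq 2$. Substituting into the definition $c_{x,e}(y)=\max_{z\in\mathcal{A}(v)}c_{x|z}(y)$ with $v=w_i$ yields the recurrence
\[
c_{x,e_1}=c_{x|r},\qquad c_{x,e_i}(y)=\max\bigl(c_{x,e_{i-1}}(y),\,c_{x|w_{i-1}}(y)\bigr)\text{ for }i\geq 2.
\]

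The base case is immediate: $a_{x,e_1}=a_{x|r}$ and $b_{x,e_1}=b_{x|r}$, with costs already known from Lemma~\ref{lm:endvertices_vertex_vertex}. For the inductive step, I would apply Lemma~\ref{lm:merge_diameter_lemma} with $\ell=2$ and additive constants $k_1=k_2=0$ to the two cost functions $c_{x,e_{i-1}}$ and $c_{x|w_{i-1}}$: the endvertices and costs of the former come from the previous iteration, and those of the latter are precomputed by Lemma~\ref{lm:endvertices_vertex_vertex}. The lemma then produces $a_{x,e_i},b_{x,e_i}$ together with $c_{x,e_i}(a_{x,e_i})$ and $c_{x,e_i}(b_{x,e_i})$ in $O(1)$ time and space per edge.

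The $O(1)$ work per edge summed over the $k\leq n-1$ edges on the root-to-$x$ path gives $O(n)$ per $x$, and summing over all $x\in V(T)$ gives the claimed $O(n^2)$ total time and space. I do not anticipate any real obstacle beyond minor bookkeeping: the only mild subtlety is when some $Y(x|z)$ is empty, so that $c_{x|z}\equiv-\infty$ and its diametral-path endpoints are undefined, in which case the corresponding input to the merge contributes nothing and can simply be ignored. The entire argument hinges on the nested structure of $\mathcal{A}(\cdot)$ along a root-to-$x$ path combined with Lemma~\ref{lm:merge_diameter_lemma}.
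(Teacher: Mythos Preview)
Your proposal is correct and follows essentially the same approach as the paper: both fix $x$, observe the recurrence $c_{x,e_i}=\max\{c_{x,e_{i-1}},\,c_{x|w_{i-1}}\}$ along the root-to-$x$ path (with base case $c_{x,e_1}=c_{x|r}$), and apply Lemma~\ref{lm:merge_diameter_lemma} once per edge to get $O(1)$ work per pair $(x,e)$ and $O(n^2)$ total. Your treatment is in fact slightly more explicit than the paper's, and your remark about empty $Y(x|z)$ is a reasonable bookkeeping note.
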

\begin{proof}
Let $x \in V(T)$ and let $e$ be an edge of the path between $r$ and $x$ in $T$. We show that $a_{x,e},b_{x,e}$ (and their corresponding costs w.r.t. $c_{x,e}$) can be computed in constant time and space. The claim then follows since $V(T),E(T)=O(n)$. We divide the proof into two cases. 

The first case occurs when $e$ is incident to $r$. Clearly, for every $y \in V(T)$, $c_{x,e}(y)=c_{x|r}(y)$. As a consequence, $a_{x,e}=a_{x|r}$ and $b_{x,e}=b_{x|r}$.

The second case occurs when $e=\big(u=\parent(v),v\big)$, with $u\neq r$. Let $e'=\big(\parent(u),u\big)$. Then, for every $y \in V(T)$, $c_{x,e}(y)=\max\big\{c_{x,e'}(y),c_{x|u}(y)\big\}$.

Therefore, using Lemma \ref{lm:merge_diameter_lemma}, all the vertices $a_{x,e},b_{x,e}$ (and their corresponding costs w.r.t. $c_{x,e}$), with $x \in V(T)$ and $e$ in the path in $T$ between $r$ and $x$, can be computed in $O(n^2)$ time and space via a preorder visit of the tree edges. 
\end{proof}

In the following we show how to compute $a_x$ and $b_x$, for every $x \in X$, in $O(n)$ time and space. First, for every $x \in X$, we consider a subdivision of $X$ into three sets $Z(x,1), Z(x,2)$, and $Z(x,3)$ (see Figure \ref{fig:zeta_seta}) such that:
\begin{itemize}
\item $Z(x,1)$ is the set of all the descendants of $x$ in $T$ ($x$ included);
\item $Z(x,2)$ is the union of all the sets $Z(s,1)$, for every sibling $s$ of $x$;
\item $Z(x,3)=X\setminus \big(Z(x,1)\cup Z(x,2)\big)$.
\end{itemize}
\begin{figure}[t]
	\centering
	\includegraphics[scale=1.1]{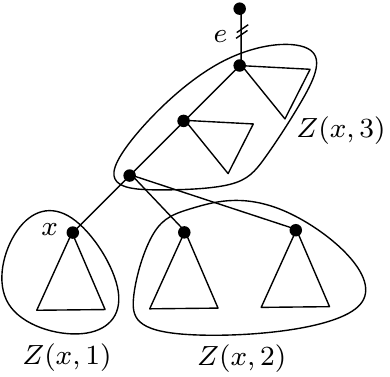}
	\caption{An example showing how the set $Z(x,1), Z(x,2)$ e $Z(x,3)$ are defined. Tree edges are solid, while triangles are subtrees.}
	\label{fig:zeta_seta}
\end{figure}

Each set $Z(x,i)$ is associated with a cost functions $c_{x,i}$ that, for every $y \in V(T)$, is defined as follows:
\begin{equation*}
c_{x,i}(y):=	\max_{x' \in Z(x,i),(x',y) \in S(e)}d_{T}(x',x).
\end{equation*}
Let $a_{x,i}$ and $b_{x,i}$ be the two endvertices of a diametral path in $T$ w.r.t. cost function $c_{x,i}$. The algorithm computes all the vertices $a_{x,i},b_{x,i}$ and their corresponding values w.r.t. cost function $c_{x,i}$, for every $x \in X$ and every $i=1,2,3$.
\begin{lemma}\label{lm:endvertices_vertex_index}
For every $x \in X$ and every $i\in \{1,2,3\}$, all the vertices $a_{x,i},b_{x,i}$ and their corresponding costs w.r.t. $c_{x,i}$ can be computed in $O(n)$ time and space.
\end{lemma}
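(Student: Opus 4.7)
The plan is to apply the Merge Diameter Lemma (Lemma~\ref{lm:merge_diameter_lemma}) three times, once for each value of $i$. Each time I set up a short recurrence that expresses $c_{x,i}$ as the pointwise maximum of a small number of cost functions whose diametral pairs are either already available from Lemma~\ref{lm:endvertices_vertex_edge} or have been produced by a previous step of the induction, so that a single invocation of Lemma~\ref{lm:merge_diameter_lemma} at each vertex runs in amortized constant time. The uniform observation behind every case is that, for every $x'\in X$, $c_{x',e}(y)=0$ precisely when $(x',y)\in S(e)$ and $-\infty$ otherwise, so
\[
c_{x,i}(y)\;=\;\max_{x'\in Z(x,i)}\bigl(c_{x',e}(y)+d_T(x',x)\bigr).
\]

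For $i=1$, the identity $Z(x,1)=\{x\}\cup\bigcup_{c}Z(c,1)$, with $c$ ranging over the children of $x$, yields $c_{x,1}(y)=\max\bigl(c_{x,e}(y),\max_{c}(1+c_{c,1}(y))\bigr)$. Processing $X$ in postorder, Lemma~\ref{lm:merge_diameter_lemma} invoked at each $x$ on $\deg(x)+1$ summands (the $c_{c,1}$'s with $k_c=1$ and $c_{x,e}$ with $k_x=0$) produces $a_{x,1},b_{x,1}$ together with their costs in $O(\deg(x)+1)$ time; the total telescopes to $O(n)$.

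For $i=2$, let $u\in X$ have children $x_1,\dots,x_k$. Since $Z(x_j,2)=\bigcup_{s\neq x_j}Z(s,1)$ and $d_T(x',x_j)=d_T(x',s)+2$ for $x'\in Z(s,1)$, we have $c_{x_j,2}(y)=2+\max_{s\neq x_j}c_{s,1}(y)$. To produce all $k$ diametral pairs without recomputation I would use the standard prefix/suffix trick: compute left-to-right prefix and right-to-left suffix merges of the $c_{x_1,1},\dots,c_{x_k,1}$ via Lemma~\ref{lm:merge_diameter_lemma}, after which each $(a_{x_j,2},b_{x_j,2})$ is obtained in constant time by merging the $(j-1)$-prefix with the $(j+1)$-suffix and incrementing the costs by $2$. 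The work at $u$ is $O(\deg(u))$, and summing over $u$ yields $O(n)$.

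For $i=3$, the crux is that for every child $x_j$ of a common parent $u$ the set $Z(x_j,3)$ is the same set $H_u:=(X\setminus Z(u,1))\cup\{u\}$, and $d_T(x',x_j)=d_T(x',u)+1$ for every $x'\in H_u$. I therefore introduce
\[
g_u(y):=\max_{x'\in H_u}\bigl(c_{x',e}(y)+d_T(x',u)\bigr),
\]
so that $c_{x_j,3}(y)=1+g_u(y)$ and the diametral endpoints of the two functions coincide. The decomposition $H_u=\{u\}\cup H_{p(u)}\cup Z(u,2)$, valid for $u\neq v$, together with $d_T(x',u)=d_T(x',p(u))+1$ for $x'\in H_{p(u)}$ and $d_T(x',u)=d_T(x',s)+2$ for $x'\in Z(s,1)\subseteq Z(u,2)$, yields the recurrence
\[
g_u(y)=\max\bigl(c_{u,e}(y),\;1+g_{p(u)}(y),\;c_{u,2}(y)\bigr),
\]
with base $g_v(y)=c_{v,e}(y)$; note that $Z(v,3)=\emptyset$, so $v$ itself needs no treatment. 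A preorder pass on $X$, invoking Lemma~\ref{lm:merge_diameter_lemma} with the three summands above at each vertex, then produces every $g_u$, and hence every $a_{x,3},b_{x,3}$, in $O(n)$ total time. The step I expect to require the most care is precisely the identification of $H_u$ and the verification of its three-way decomposition; once these are in hand, correctness and the $O(n)$ time and space bounds follow directly from Lemma~\ref{lm:merge_diameter_lemma}.
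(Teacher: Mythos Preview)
Your proposal is correct and follows essentially the same approach as the paper: postorder merging for $i=1$, the prefix/suffix trick at each parent for $i=2$, and a preorder recursion for $i=3$. Your auxiliary function $g_u$ is simply the paper's $c_{x,3}$ indexed by the parent (since $c_{x,3}$ depends only on $p(x)$); your recurrence $g_u=\max\{c_{u,e},\,1+g_{p(u)},\,c_{u,2}\}$ is exactly the paper's $c_{x,3}=1+\max\{c_{\bar x,e},\,c_{\bar x,2},\,c_{\bar x,3}\}$ with $\bar x=p(x)=u$.
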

\begin{proof}
We divide the proof into three cases, according to the value of $i$.

The first case is $i=1$. Clearly, if $x$ is a leaf vertex, then $a_{x,1}=a_{x,e}$ and $b_{x,1}=b_{x,e}$. Moreover, $c_{x,1}(a_{x,1})=c_{x,e}(a_{x,e})$ as well as $c_{x,1}(b_{x,1})=c_{x,e}(b_{x,e})$. Therefore, we can assume that $x$ is not a leaf vertex. Let $x_1,\dots,x_{\ell-1}$ be the $\ell-1$ children of $x$ in $T$.  Since $Z(x,1)=\{x\} \cup \bigcup_{i=1}^{\ell-1}Z(x_i,1)$, for every $y \in V(T)$, we have that 
\begin{equation*}
c_{x,1}(y)=\max\Big\{c_{x,e}(y),1+\max_{i=1,\dots,\ell-1}c_{x_i,1}(y)\Big\}.
\end{equation*}
Therefore, using Lemma \ref{lm:merge_diameter_lemma}, for every $x \in X$, all the vertices $a_{x,1},b_{x,1}$, together with their corresponding costs w.r.t. $c_{x,i}$, can be computed in $O(n)$ time and space via a postorder visit of the vertices in $X$. 

We consider the case in which $i=2$ and we assume that, for every $x \in X$, all the vertices $a_{x,1},b_{x,1}$ and their corresponding costs w.r.t. $c_{x,1}$ are known. Let $\bar x$ be the parent of $x$ in $T$ and let $x_1,\dots,x_\ell$ be the $\ell \geq 1$ children of $\bar x$ in $T$. For every $i=1,\dots,\ell$, let $\bar c_{\bar x,i}$ and $\hat c_{\bar x,i}$ be two cost functions that, for every $y \in V(T)$, are defined as follows:
\begin{align*}
\bar c_{\bar x,i}(y) :=2+\max_{j=1,\dots,i-1}c_{x_j,1}(y),\\
\hat c_{\bar x,i}(y) :=2+\max_{j=i+1,\dots,\ell}c_{x_j,1}(y).
\end{align*}
For every $i=2,\dots,\ell-1$, the algorithm computes the vertices $\bar a_{\bar x,i},\bar b_{\bar x,i},\hat a_{\bar x,i},\hat b_{\bar x,i}$ and the costs $\bar c_{\bar x,i}(\bar a_{\bar x,i}),\bar c_{\bar x,i}(\bar b_{\bar x,i}),\hat c_{\bar x,i}(\hat a_{\bar x,i}),\hat c_{\bar x,i}(\hat b_{\bar x,i})$ using simple dynamic programming and Lemma~\ref{lm:merge_diameter_lemma}. Indeed, $\bar c_{\bar x,2}(y)=2+c_{x_1,1}(y)$ as well as $\hat c_{\bar x,\ell-1}(y)=2+c_{x_{\ell},1}(y)$. Furthermore,  for every $i > 2$,
\begin{equation*}
\bar c_{\bar x,i}(y) =\max\big\{\bar c_{\bar x,i-1}(y), 2+c_{x_{i-1},1}(y)\big\},
\end{equation*}
while for every $i < \ell-1$,
\begin{equation*}
\hat c_{\bar x,i}(y) =\max\big\{\hat c_{\bar x, i+1}(y), 2+c_{x_{i+1},1}(y)\big\}.
\end{equation*}
We observe that  $\bar a_{\bar x,i},\bar b_{\bar x,i},\hat a_{\bar x,i},\hat b_{\bar x,i}$ and the costs $\bar c_{\bar x,i}(\bar a_{\bar x,i}),\bar c_{\bar x,i}(\bar b_{\bar x,i}),\hat c_{\bar x,i}(\hat a_{\bar x,i}),\hat c_{\bar x,i}(\hat b_{\bar x,i})$ can be computed in $O(\ell)$ time and space. As a consequence, these pieces of information can be precomputed for every $x \in V(T)$ in $O(n^2)$ time and space.
Let $x=x_i$, for some $i=1,\dots,\ell$.
Since 
\begin{equation*}
Z(x,2)=\bigcup_{j=1,\dots,\ell: j\neq i}Z(x_j,1)=\bigcup_{j=1}^{i-1}Z(x_j,1)\cup \bigcup_{j=i+1}^{\ell}Z(x_j,1),
\end{equation*}
for every $y \in V(T)$, we have that
\begin{equation*}
c_{x,2}(y)=\max\big\{\bar c_{\bar x,i}(y),\hat c_{\bar x,i}(y)\big\}.
\end{equation*}
Therefore, using Lemma \ref{lm:merge_diameter_lemma}, all the vertices $a_{x,2},b_{x,2}$ and their corresponding costs w.r.t. $c_{x,2}$ can be computed in $O(n)$ time and space for every $x \in X$.

Finally, we consider the case in which $i=3$ and we assume that all the vertices $a_{x,j},b_{x,j}$ and the costs $c_{x,j}(a_{x,j}),c_{x,j}(b_{x,j})$, with $x \in X$ and $j=1,2$, are known. If $x=v$, then $Z(x,3)=\emptyset$. Therefore, we only need to prove the claim when $x\neq v$. Let $\bar x$ be the parent of $x$ in $T$.
Since
\begin{equation*}
Z(x,3)=\big\{\bar x\big\}\cup Z(\bar x,2) \cup Z(\bar x,3)
\end{equation*}
for every $y \in V(T)$, we have that 
\begin{equation*}
c_{x,3}(y)=1+\max\big\{c_{\bar x,e}(y),c_{\bar x,2}(y),c_{\bar x,3}(y)\big\}.
\end{equation*}
Therefore, using Lemma \ref{lm:merge_diameter_lemma}, for every $x \in X$, all the vertices $a_{x,3},b_{x,3}$, and the corresponding costs w.r.t. $c_{x,3}$, can be computed in $O(n)$ time and space by a preorder visit of the tree vertices. This completes the proof.
\end{proof}
We can now prove the following.
\begin{lemma}\label{lm:endvertices_diametral_path}
For every $x \in X$, all the vertices $a_x, b_x, \gamma_x$ and the costs $c_x(a_x)$ and $c_x(b_x)$ can be computed in $O(n)$ time and space.
\end{lemma}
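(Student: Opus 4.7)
The plan is to combine the three per-index pieces assembled in Lemma~\ref{lm:endvertices_vertex_index} with the Merge Diameter Lemma. First I would observe that, by construction, $X = Z(x,1) \cup Z(x,2) \cup Z(x,3)$, so for every $y \in V(T)$ we have
\begin{equation*}
c_x(y) = \max_{x' \in X, (x',y) \in S(e)} d_T(x',x) = \max_{i=1,2,3} c_{x,i}(y).
\end{equation*}
This is exactly the hypothesis of Lemma~\ref{lm:merge_diameter_lemma} with $\ell = 3$, cost functions $c_{x,1}, c_{x,2}, c_{x,3}$, and offsets $k_1 = k_2 = k_3 = 0$.

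Next, invoking Lemma~\ref{lm:endvertices_vertex_index}, I would assume we already have, in total $O(n)$ time and space, the four vertices $a_{x,i}, b_{x,i}$ together with the associated costs $c_{x,i}(a_{x,i})$ and $c_{x,i}(b_{x,i})$, for every $x \in X$ and every $i \in \{1,2,3\}$. Then for each individual $x \in X$, the Merge Diameter Lemma delivers $a_x$, $b_x$ and the costs $c_x(a_x), c_x(b_x)$ in $O(\ell) = O(1)$ time and space from these twelve pieces of data. Summing over $x \in X$ gives the claimed $O(n)$ total bound for the diametral endvertices part.

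For the center $\gamma_x$, I would appeal to the constant-time procedure sketched in the footnote following Lemma~\ref{lm:sirocco}: once $a_x, b_x, c_x(a_x), c_x(b_x)$ are in hand and tree edges have unit length, a single LCA query (to find $\bar x = \lca(a_x,b_x)$) together with a constant number of level-ancestor queries locates $\gamma_x$. Since both LCA and level-ancestor queries can be answered in $O(1)$ time after an $O(n)$ preprocessing of $T$~\cite{DBLP:journals/tcs/BenderF04, DBLP:conf/stoc/GabowT83}, this again costs $O(1)$ per $x$ and $O(n)$ in total.

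There is no real obstacle here beyond properly stitching the ingredients: the subtle point is just verifying that the decomposition $X = Z(x,1)\cup Z(x,2)\cup Z(x,3)$ correctly captures $c_x$ (which is immediate from the definitions of the three sets), and that the LCA/level-ancestor machinery is preprocessed once at the beginning of the algorithm rather than per edge $e$, so that the per-edge work remains $O(n)$.
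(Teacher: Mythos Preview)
Your proposal is correct and follows essentially the same approach as the paper: decompose $c_x$ as the maximum of $c_{x,1}, c_{x,2}, c_{x,3}$ via $X = Z(x,1)\cup Z(x,2)\cup Z(x,3)$, apply the Merge Diameter Lemma with $\ell=3$ to obtain $a_x,b_x$ and their costs in $O(1)$ time per $x$ (given the data from Lemma~\ref{lm:endvertices_vertex_index}), and then recover $\gamma_x$ in $O(1)$ time using the LCA/level-ancestor procedure described after Lemma~\ref{lm:sirocco}. Your added remark that the LCA and level-ancestor structures are preprocessed once on $T$ is a reasonable clarification and is consistent with the paper's setup.
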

\begin{proof}
Let $x \in X$ be fixed. Since $X=Z(x,1)\cup Z(x,2)\cup Z(x,3)$, by definition of $c_{x,i}$, for every $y \in V(T)$, we have that
$c_x(y)=\max\big\{c_{x,1}(y),c_{x,2}(y),c_{x,3}(y)\big\}$.
Therefore, under the assumption that all the vertices $a_{x,i},b_{x,i}$ and all the values $c_{x,i}(a_{x,i})$, $c_{x,i}(b_{x,i})$, with $x \in X$ and $i=1,2,3$, are known, using Lemma \ref{lm:merge_diameter_lemma}, we can compute $a_x$ and $b_x$, together with the values $c_x(a_x)$ and $c_x(b_x)$ in constant time. The claim follows since, as we already discussed at the end of Section~\ref{subsec:how_to_compute_the_candidate_best_swap_edge}, $\gamma_x$ can be computed in constant time.
\end{proof}

\subsection{How to compute the vertex \texorpdfstring{$y_x$}{closest to the center}}
In this subsection we show how to build, in $O(n)$ time and space, a static data structure for each vertex $x$ that, for every $e \in E(T)$, is able to compute the vertex $y_x$ of $Y(x,e)$ closest to $\gamma_x$ in constant time. Let $x \in V(T)$ be fixed and let $r=z_1,\dots, z_h$ be the $h$ proper ancestors of $x$, in the order in which they are encountered by a traversal of the path in $T$ from $r$ to $x$. For every $i=1,\dots,h$, let $V(x|z_i):=\big\{y \in V(T) \mid z_i=\lca(x,y)\big\}$. Observe that $Y(x|z_i) \subseteq V(x|z_i)$. We label each vertex of $V(x|z_i)$ with the vertex of $Y(x|z_i)$ that is closest to it (ties are broken arbitrarily). More precisely, if $y \in Y(x|z_i)$ is the vertex closest to $y' \in V(x|z_i)$, then $\lambda(y')=y$ is the label of $y'$. We also build two {\em range-minimum-query} data structures $\mathcal{R}$ and $\mathcal{R}'$, i.e., two vectors of $h$ elements each such that, for each $i=1,\dots,h$, $\mathcal{R}[i]=d_T\big(z_i,\lambda(z_i)\big)+i$ and $\mathcal{R}'[i]=d_T\big(z_i,\lambda(z_i)\big)-i$. The following lemma holds.
\begin{lemma}\label{lm:data_structures}
The labeling $\lambda$ and the two range-minimum-query data structures $\mathcal{R}$ and $\mathcal{R}'$ can be computed in $O(n)$ time and space.
\end{lemma}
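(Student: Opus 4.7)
The plan is to do three things for a fixed vertex $x$ whose proper ancestors are $z_1, \ldots, z_h$: (i) compute the partition of $V(T) \setminus (\{x\} \cup T_x)$ (where $T_x$ denotes the subtree of $T$ rooted at $x$) into the sets $V(x|z_1), \ldots, V(x|z_h)$, together with the subsets $Y(x|z_i) \subseteq V(x|z_i)$; (ii) compute the labeling $\lambda$ by running, for each $i$, a multi-source breadth-first search on the induced subgraph $T[V(x|z_i)]$ starting from the sources in $Y(x|z_i)$; and (iii) fill in the arrays $\mathcal{R}$ and $\mathcal{R}'$ and augment them with standard range-minimum-query data structures. The ancestors $z_1, \ldots, z_h$ are enumerated in $O(h)$ time by walking up from $x$; then, using the $O(1)$-time LCA data structure of Gabow and Tarjan, every $y \in V(T) \setminus (\{x\} \cup T_x)$ is assigned to the set $V(x|z_i)$ with $\lca(x,y) = z_i$ in constant time, for a total cost of $O(n)$. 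The subsets $Y(x|z_i)$ are obtained analogously by scanning the edges of $E(x)$.

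For step (ii), I would first argue that $T[V(x|z_i)]$ is itself a tree rooted at $z_i$: each $y \in V(x|z_i) \setminus \{z_i\}$ is a descendant of some child $c$ of $z_i$ with $c \neq z_{i+1}$ (or $c \neq x$, when $i = h$), and the whole subtree of $T$ rooted at such a $c$ lies entirely in $V(x|z_i)$. This in turn ensures that the $T$-path between any two vertices of $V(x|z_i)$ stays inside $V(x|z_i)$, so distances within $T[V(x|z_i)]$ coincide with $d_T$. A standard multi-source BFS on $T[V(x|z_i)]$ from the sources $Y(x|z_i)$ therefore labels every vertex $y' \in V(x|z_i)$ with $\lambda(y')$ and records the distance $d_T(y', \lambda(y'))$ in constant time per visited vertex. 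When $Y(x|z_i) = \emptyset$, I would leave $\lambda$ undefined on $V(x|z_i)$ and treat the corresponding distances as $+\infty$ in the arrays built below. Since the sets $V(x|z_i)$ are pairwise disjoint, the total work across all BFSes is $O(n)$.

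For step (iii), having recorded $d_T(z_i, \lambda(z_i))$ during the $i$-th BFS, I would fill $\mathcal{R}$ and $\mathcal{R}'$ in $O(h)$ time by direct lookup and then augment each array with the $\langle O(h), O(1) \rangle$ range-minimum-query data structure of Bender and Farach, which needs only linear preprocessing time and space. The main obstacle is the correctness guarantee of step (ii): one must verify that the distances returned by a BFS confined to $T[V(x|z_i)]$ really equal the \emph{global} $T$-distances to the closest source. This is exactly what the ``closed under $T$-paths'' property of $V(x|z_i)$ provides, and once that is established all the remaining work is routine bookkeeping that fits within the target $O(n)$ time and space bounds.
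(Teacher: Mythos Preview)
Your proposal is correct and follows essentially the same approach as the paper: compute $\lambda$ by a multi-source BFS from $Y(x|z_i)$ within each block $V(x|z_i)$, and then build the two arrays with linear-time RMQ preprocessing. The paper's proof is a two-sentence sketch that describes the same layered labelling procedure without naming it BFS; your write-up is more explicit, in particular your observation that each $T[V(x|z_i)]$ is a connected subtree (so the confined BFS really returns $d_T$-distances) fills a small gap that the paper leaves implicit.
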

\begin{proof}
It is known that the range-minimum-query data structure of size $h$ can be computed in $O(h)$ time and space \cite{DBLP:journals/tcs/BenderF04, DBLP:conf/stoc/GabowT83}. The labeling $\lambda$ can be computed in $O(n)$ time and space by a simple algorithm that, for every $i=1,\dots,h$, first initializes $\lambda(y)=y$ for every $y \in Y(x,z_i)$ and then, during phase $\phi$, labels all the still unlabeled vertices which are at distance $\phi$ from some vertex in $Y(x|z_i)$.
\end{proof}
Let $e=(z_i=\parent(v),v)$ be the failing edge. We show how to find the vertex $y_x \in Y(x,e)$ that is closest to $\gamma_x$ in constant time. First we compute $j$ such that $z_j=\lca(\gamma_x,x)$. Next we make at most two range minimum queries to compute the following two indices:
\begin{itemize}
\item the index $t'$ containing the minimum value within the range $[1,j-1]$ in $\mathcal{R}'$;
\item the index $t$ containing the minimum value within the range $[j+1,i]$ in $\mathcal{R}$.
\end{itemize}
The algorithm chooses $y_x$ such that
\begin{equation*}
y_x \in \arg\min_{y \in \{\lambda(\gamma_x),\lambda(z_{t'}),\lambda(z_t)\}}d_T(y,\gamma_x),
\end{equation*}

\begin{lemma}\label{lm:vertex_closest_to_center}
The vertex $y_x$ selected by the algorithm satisfies $y_x \in \arg\min_{y \in Y(x,e)}d_T(y,\gamma_x)$.
\end{lemma}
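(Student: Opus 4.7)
The plan is to let $y^* \in \arg\min_{y \in Y(x,e)} d_T(y,\gamma_x)$ be an arbitrary optimum and show that at least one of the three candidates $\lambda(\gamma_x)$, $\lambda(z_{t'})$, $\lambda(z_t)$ considered by the algorithm has distance to $\gamma_x$ no larger than $d_T(y^*,\gamma_x)$. Since $Y(x,e)=\bigcup_{k=1}^{i} Y(x|z_k)$, we may write $y^*\in Y(x|z_k)$ for some index $k\in\{1,\dots,i\}$, and the proof splits into three cases according to how $k$ compares to $j$, where $z_j=\lca(\gamma_x,x)$.

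If $k=j$, then $\gamma_x\in V(x|z_j)$ and $y^*\in Y(x|z_j)$, so by the very definition of $\lambda$ we have $d_T(\lambda(\gamma_x),\gamma_x)\le d_T(y^*,\gamma_x)$. If $k\neq j$, the key observation is that the path in $T$ from $y^*$ to $\gamma_x$ is forced to visit both $z_k$ and $z_j$. Indeed, $\lca(x,y^*)=z_k$ places $y^*$ in the subtree of $z_k$ that avoids $z_{k+1}$, and $\lca(x,\gamma_x)=z_j$ places $\gamma_x$ in the subtree of $z_j$ that avoids $z_{j+1}$; since $z_k$ and $z_j$ both lie on the ancestor chain of $x$, one is a proper ancestor of the other, and a short case-check shows that $y^*$ and $\gamma_x$ are separated by $z_{\min(k,j)}$ and the ancestor segment between $z_k$ and $z_j$. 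This yields the decomposition
\begin{equation*}
d_T(y^*,\gamma_x)=d_T(y^*,z_k)+|k-j|+d_T(z_j,\gamma_x).
\end{equation*}

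Because $\lambda(z_k)$ is by definition the vertex of $Y(x|z_k)$ closest to $z_k$, we have $d_T(y^*,z_k)\ge d_T(\lambda(z_k),z_k)$. Substituting this bound and rewriting via $\mathcal{R}'[k]=d_T(z_k,\lambda(z_k))-k$ when $k<j$, respectively $\mathcal{R}[k]=d_T(z_k,\lambda(z_k))+k$ when $k>j$, and then invoking the range-minimum property of $t'$ and $t$, produces $d_T(\lambda(z_{t'}),\gamma_x)=\mathcal{R}'[t']+j+d_T(z_j,\gamma_x)\le d_T(y^*,\gamma_x)$ when $k<j$ and the symmetric bound $d_T(\lambda(z_t),\gamma_x)=\mathcal{R}[t]-j+d_T(z_j,\gamma_x)\le d_T(y^*,\gamma_x)$ when $k>j$. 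In each sub-case some element of $\{\lambda(\gamma_x),\lambda(z_{t'}),\lambda(z_t)\}$ is already at least as close to $\gamma_x$ as $y^*$, so the algorithm's choice of $y_x$ satisfies $d_T(y_x,\gamma_x)\le d_T(y^*,\gamma_x)$, which is the claim.

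The main obstacle is the geometric verification of the path-length identity above when $k\neq j$: the subtle point is ruling out that $y^*$ and $\gamma_x$ share a common ancestor strictly below $z_{\min(k,j)}$, which is excluded precisely by the disjoint-subtree interpretation of the sets $V(x|\cdot)$. Once that decomposition is in place, the shifts $+i$ and $-i$ used in the definitions of $\mathcal{R}$ and $\mathcal{R}'$ are exactly what absorb the $|k-j|$ term into a constant, so that the minimization over $k$ reduces to a single range-minimum query on each side of $j$.
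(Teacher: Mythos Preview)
Your proof is correct and follows essentially the same approach as the paper: fix an optimum $y^*\in Y(x|z_k)$, split into the three cases $k=j$, $k<j$, $k>j$, and in the latter two cases use the path decomposition $d_T(y^*,\gamma_x)=d_T(y^*,z_k)+|k-j|+d_T(z_j,\gamma_x)$ together with the range-minimum property of $t'$ (resp.\ $t$) to bound $d_T(\lambda(z_{t'}),\gamma_x)$ (resp.\ $d_T(\lambda(z_t),\gamma_x)$). Your write-up is in fact a bit more explicit than the paper's about why the decomposition holds (via the $\lca$ analysis), but the argument is the same.
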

\begin{proof}
Let $y^* \in \arg\min_{y \in Y(x,e)}d_T(y,\gamma_x)$. Clearly, for some $k=1,\dots,i$, $y^* \in Y(x|z_k)$. We prove the claim by showing that $d_T(y_x,\gamma_x)\leq d_T(y^*,\gamma_x)$. We divide the proof into three cases, according to the value of $k$.

\noindent The first case is when $k=j$. We have that $d_T(y_x,\gamma_x) \leq d_T\big(\lambda(\gamma_x),\gamma_x\big) = d_T(y^*,\gamma_x)$.

The second case occurs when $k < j$. Clearly, $d_T\big(\lambda(z_k),z_k\big) \leq d_T(y^*,z_k)$. Moreover, $d_T\big(\lambda(z_{t'}),z_{t'}\big)-t' \leq d_T\big(\lambda(z_k),z_k\big)-k$.
Therefore,
\begin{align*}
d_T(y_x,\gamma_x) 	& \leq d_T\big(\lambda(z_{t'}),\gamma_x\big) = d_T\big(\lambda(z_{t'}),z_{t'}\big)+d_T(z_{t'},z_j)+d_T(z_j,\gamma_x)\\
				& = d_T\big(\lambda(z_{t'}),z_{t'}\big) + j - t' + d_T(z_j,\gamma_x) \leq d_T\big(\lambda(z_k),z_k\big) + j - k + d_T(z_j,\gamma_x)\\
				& \leq d_T(y^*,z_k)+j-k+d_T(z_j,\gamma_x) = d_T(y^*,\gamma_x).
\end{align*}

The third case occurs when $j < k \leq i$. Clearly, $d_T\big(\lambda(z_k),z_k\big) \leq d_T(y^*,z_k)$. Moreover, $d_T\big(\lambda(z_t),z_t\big)+t \leq d_T\big(\lambda(z_k),z_k\big)+k$. Therefore,
\begin{align*}
d_T(y_x,\gamma_x) 	& \leq d_T\big(\lambda(z_t),\gamma_x\big) = d_T\big(\lambda(z_t),z_t\big)+d_T(z_t,z_j)+d_T(z_j,\gamma_x)\\
				& = d_T\big(\lambda(z_t),z_t\big) + t-j + d_T(z_j,\gamma_x) \leq d_T\big(\lambda(z_k),z_k\big) + k - j + d_T(z_j,\gamma_x)\\
				& \leq d_T(y^*,z_k)+k-j+d_T(z_j,\gamma_x) = d_T(y^*,\gamma_x).
\end{align*}
The claim follows.
\end{proof}
We can finally state the main theorem.
\begin{theorem}
All the best swap edges of a tree spanner $T$ in 2-edge-connecte, unweighted, and undirected graphs can be computed in $O(n^2)$ time and space.
\end{theorem}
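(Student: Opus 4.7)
The plan is to assemble the theorem from the lemmas established throughout Section 3, checking that the preprocessing fits within $O(n^2)$ time and space and that the per-edge work fits within $O(n)$.

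First I would perform the global preprocessing phase. Using an $O(1)$-query LCA data structure, compute the sets $E(x)$ for every $x\in V(T)$ in $O(n+m)=O(n^2)$ total time. Then, invoke Lemma~\ref{lm:endvertices_vertex_vertex} to obtain $a_{x|z},b_{x|z}$ and the costs $c_{x|z}(a_{x|z}),c_{x|z}(b_{x|z})$ for every $x\in V(T)$ and every $z\in\mathcal{A}(x)$ within $O(n^2)$ time and space. Feed these into Lemma~\ref{lm:endvertices_vertex_edge} to obtain $a_{x,e},b_{x,e}$ and the associated costs for every $x\in V(T)$ and every edge $e$ on the $r$-to-$x$ path, again within $O(n^2)$. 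Finally, for every vertex $x$, build the labeling $\lambda$ together with the two range-minimum-query data structures $\mathcal{R}$ and $\mathcal{R}'$ as described in Lemma~\ref{lm:data_structures}; since each such construction costs $O(n)$, the overall cost of this step is $O(n^2)$.

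Next I would process the edges of $T$ in postorder. For each fixed edge $e=(p(v),v)$, with $X$ the vertex set of the subtree rooted at $v$, the plan is to compute a best swap edge in $O(|V(T)|)=O(n)$ time. By Lemma~\ref{lm:endvertices_vertex_index}, all the vertices $a_{x,i},b_{x,i}$ and their costs w.r.t.\ $c_{x,i}$, for $x\in X$ and $i\in\{1,2,3\}$, are computable in $O(n)$ time and space using the already-available information $a_{x,e},b_{x,e}$ (from preprocessing) together with Lemma~\ref{lm:merge_diameter_lemma}. Combining the three cost functions via Lemma~\ref{lm:endvertices_diametral_path} yields $a_x,b_x$ and $\gamma_x$ for every $x\in X$ in $O(n)$ total time. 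Then, for each $x\in X$, a constant number of LCA, level-ancestor and range-minimum queries in the data structures attached to $x$ produce the vertex $y_x\in Y(x,e)$ closest to $\gamma_x$, by Lemma~\ref{lm:vertex_closest_to_center}. The value $\sigma_{G-e}(T_{e/f_x})$ is then given in closed form by Lemma~\ref{lm:sirocco}, and scanning the $|X|=O(n)$ candidates selects the overall best swap edge for $e$ as prescribed by Lemma~\ref{lm:best_swap_edge_computation}.

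Finally I would account the totals. The preprocessing contributes $O(n^2)$ time and space. The per-edge work is $O(n)$, and since $|E(T)|=n-1$, the main loop contributes $O(n^2)$ time. The space used in the main loop can be reused across iterations, so the overall space stays $O(n^2)$. The main obstacle in writing this out cleanly is bookkeeping: one must verify that each application of Lemma~\ref{lm:merge_diameter_lemma} during the per-edge computation can be performed in amortized constant time per vertex (so that the total over $X$ is $O(n)$), and in particular that the information $\bar c_{\bar x,i},\hat c_{\bar x,i}$ needed in the case $i=2$ of Lemma~\ref{lm:endvertices_vertex_index} has already been precomputed at cost $O(n^2)$ globally. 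Once this bookkeeping is in order, the theorem follows immediately from the correctness of the candidate swap-edge selection in equation~(\ref{eq:best_swap_edge}).
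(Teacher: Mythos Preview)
Your proposal is correct and follows essentially the same approach as the paper: assemble the global preprocessing via Lemmas~\ref{lm:endvertices_vertex_vertex}, \ref{lm:endvertices_vertex_edge}, and \ref{lm:data_structures} at cost $O(n^2)$, then for each of the $n-1$ tree edges invoke Lemmas~\ref{lm:endvertices_vertex_index}, \ref{lm:endvertices_diametral_path}, \ref{lm:vertex_closest_to_center}, and \ref{lm:sirocco} at cost $O(n)$ apiece, concluding with Lemma~\ref{lm:best_swap_edge_computation}. Your explicit flagging of the $i=2$ bookkeeping in Lemma~\ref{lm:endvertices_vertex_index} is apt and matches what the paper itself notes there.
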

\begin{proof}
From~Lemma \ref{lm:endvertices_diametral_path}, for a fixed edge $e \in E(T)$, all the vertices $a_x, b_x, \gamma_x$ and all the values $c_x(a_x),c_x(b_x)$, with $x \in X$, can be computed in $O(n)$ time and space. Therefore, such vertices and values can be computed for every edge of $T$ in $O(n^2)$ time and space.

By Lemma \ref{lm:vertex_closest_to_center}, for a fixed edge $e$ of $T$ and a fixed vertex $x$, we can compute $y_x$, i.e., $f_x=(x,y_x)$, by making at most two queries, each of which requires constant time, on the two range-minimum-query data structures associated with $x$. Therefore, the $O(n)$ candidate best swap edges of $e$ can be computed in $O(n)$ time. Furthermore, using Lemma \ref{lm:sirocco}, we can compute $\sigma\big(T_{e/f_x}\big)=1+\max\big\{d_T(y_x,a_x)+c_x(a_x),d_T(y_x,b_x)+c_x(b_x)\big\}$ in constant time. Hence, thanks to Lemma~\ref{lm:best_swap_edge_computation}, the best swap edge $f^*$ of $e$ can be computed in $O(n)$ time. The claim follows.
\end{proof}

\bibliography{bibliography}

\end{document}